\documentclass[10pt,aps,twocolumn,longbibliography,prx]{revtex4-2}
\usepackage[english]{babel}
\usepackage[utf8]{inputenc}
\usepackage[colorinlistoftodos, color=green!40, prependcaption]{todonotes}
\usepackage{amsthm}
\usepackage{comment}
\usepackage{mathtools}
\usepackage{physics}
\usepackage{xcolor}
\usepackage{graphicx}
\usepackage[export]{adjustbox}
\usepackage{graphbox}
\usepackage{placeins}
\usepackage[T1]{fontenc}
\usepackage{lipsum}
\usepackage{csquotes}
\usepackage[pdftex, pdftitle={Article}, pdfauthor={Author}]{hyperref} 
\usepackage{amssymb}
\usepackage{amsfonts}
\usepackage{amsmath}
\usepackage{stmaryrd}
\usepackage{braket}
\usepackage{dsfont}
\usepackage{bbold}
\usepackage{relsize}
\usepackage{float}
\usepackage[font=scriptsize]{caption}
\setlength{\marginparwidth}{2.5cm}
\let\ACMmaketitle=\maketitle
\renewcommand{\maketitle}{\begingroup\let\footnote=\thanks \ACMmaketitle\endgroup}

\newcommand{\overbar}[1]{\mkern 1.5mu\overline{\mkern-1.5mu#1\mkern-1.5mu}\mkern 1.5mu}
\newcommand{\M}[1]{\mathcal{M}_{#1}}
\newcommand{\C}[1]{\mathbb{C}^{#1}}

\newtheorem{theorem}{Theorem}[section]
\newtheorem{definition}[theorem]{Definition}
\newtheorem*{definition*}{Definition}
\newtheorem{proposition}[theorem]{Proposition}
\newtheorem{corollary}[theorem]{Corollary}

\newtheorem{remark}[theorem]{Remark}

\newtheorem*{conjecture*}{Conjecture}

\theoremstyle{definition}
\newtheorem{example}[theorem]{Example}

\newcommand{\PRLsep}{\noindent\makebox[\linewidth]{\resizebox{0.3333\linewidth}{1pt}{$\blacklozenge$}}\bigskip}

\begin{document}
%TC:ignore
    \title{Can entanglement hide behind triangle-free graphs?}
    \thanks{Spoiler alert: Yes, it can!}
    \author{Satvik Singh}
    \email[Correspondence email address: ]{satviksingh2@gmail.com}
    \affiliation{Department of Physical Sciences, Indian Institute of Science Education and Research (IISER) Mohali, Punjab, India}

\date{\today}

\begin{abstract}

We present a novel approach to unveil a new kind of entanglement in bipartite quantum states whose diagonal zero-patterns in suitable matrix representations admit a nice description in terms of triangle-free graphs. Upon application of a local averaging operation, the separability of such states transforms into a simple matrix positivity condition, violation of which implies the presence of entanglement. We completely characterize the class of triangle-free graphs which allows for non-trivial entanglement detection using the above test. Moreover, we develop a recipe to construct a plethora of unique classes of PPT entangled triangle-free states in arbitrary dimensions. Finally, we link the task of entanglement detection in general states to the well-known graph-theoretic problem of finding triangle-free induced subgraphs in a given graph.

\end{abstract}

\keywords{ }
\maketitle
%TC:endignore
\section{Introduction}
Termed ``spooky action at a distance'' by Einstein, quantum entanglement has come a long way since its inception in the first half of the 20\textsuperscript{th} century \cite{Einstein1935spooky,schrodinger1935entanglement} --- both in terms of theoretical and experimental relevance. Now understood as one of the most fundamental non-classical features of quantum theory, entanglement has garnered a reputation as a resource of immense practical worth, with its groundbreaking effects visible in a wide array of quantum cryptographic \cite{Ekert1991crypto, Vazirani2014crypto}, teleportation \cite{Bennet1993teleportation} and computation \cite{Raussendorf2003comp} protocols. However, even after more than 85 years of its discovery, the theory of entanglement has managed to retain its richness and complexity.

First concrete signs of entanglement were explored by Bell in terms of violations of spin correlation inequalities \cite{Bell1964inequalities}, which were later shown by Werner to be sufficient but not necessary to detect entanglement \cite{Werner1989ent}. Connections with the theory of positive maps were not made until the late `90s, when positivity under partial transposition (PPT) was shown by Peres to be a necessary condition for separability \cite{Peres1996PPT}, which was later also proved to be sufficient in low dimensional $2\otimes2$ and $2\otimes 3$ systems \cite{Horodecki1996PPT}. Since then, the study of separability has witnessed tremendous growth, with a myriad of tests now available to detect entanglement in various different scenarios, see \cite[Section VI B]{Horodecki2009review} and \cite{Chruscinski2014positive}. The primary goal of all these tests is to bypass a major theoretical hurdle which is intrinsic to the structure of entanglement itself, namely, the NP-hardness of the weak membership problem for the convex set of separable bipartite states \cite{Gurvits2003NPhard,Sevag2010NPhard}. Put simply, it is not possible for any classical algorithm to efficiently determine whether a given state is entangled or not (assuming that the widely believed P$\neq$NP result holds).
Still, several algorithms with complexities scaling exponentially with the system dimension do exist \cite{Lewenstein1998alg,Doherty2004alg, Eisert2004alg}. 

In this work, we unearth a new kind of entanglement in bipartite states, which is easy to detect but rich enough to manifest itself in astonishing variety. For a $d\otimes d$ state, this entanglement camouflages beneath an intriguing design of zeros on the state's diagonal, which, when rearranged in the form of a $d\times d$ matrix, reveals a distinctive ``triangle-free'' ($\Delta$-free) zero-pattern in the off-diagonal part of the matrix. The terminology is akin to the one used in graph theory, where $\Delta$-free graphs have been a subject of interest for well over a century. To detect entanglement in these $\Delta$-free states, we project them onto the subspace of local diagonal orthogonal invariant (LDOI) matrices, through a local operation which preserves the separability and $\Delta$-free property of states. LDOI states have been amply scrutinized in literature, \cite{Chruscinski2006PPTstates, Johnston2019pairwise, Nechita2020graphical, Singh2020diagonal}, since many important examples of PPT (and NPT) states are of this type: Werner and Isotropic states \cite{Werner1989ent, Horodecki1999iso}, Diagonal Symmetric states \cite{Yu2016Dicke,tura2018Dicke}, canonical NPT states \cite{Shor2000NPT}, to name a few (see also \cite[Section 3]{Singh2020diagonal}). Separable states in this class admit an equivalent description in terms of the cone of triplewise completely positive matrices, which is a generalization of the well-studied cone of completely positive matrices \cite{abraham2003completely}. The already established significance of $\Delta$-free graphs within the theory of completely positive matrices \cite[Section 2.4]{abraham2003completely} is the primary source of inspiration for the results we present here. We will see that separability in $\Delta$-free LDOI states materializes into simple positivity conditions on certain associated matrices, enabling one to easily detect entanglement in such states. As the number of $\Delta$-free zero patterns increases rather tremendously with the system's dimensions, so does the number of distinct $\Delta$-free entangled families of states. For perspective, in a $15\otimes 15$ system, we'll provide an explicit way to construct $\sim 10^{10}$ distinct families of PPT entangled $\Delta$-free states. Because of this sheer diversity, the traditional methods for entanglement detection get crippled in the regime of $\Delta$-free states. In contrast, the simplicity of our method cannot be overstated, which provides a highly non-trivial yet computationally efficient technique for entanglement detection in a wide array of scenarios. \\
\phantom{--}Let us now briefly comment on this paper's organization. In Section~\ref{sec:preliminaries}, we review the theory of LDOI states and graphs. Section~\ref{sec:main} contains the primary entanglement test and hence forms the core of our work. In this section, we also classify the subset of $\Delta$-free graphs which allows for the possibility of non-trivial entanglement detection using our test. A systematic scheme to construct new families of PPT entangled $\Delta$-free states in arbitrary dimensions is articulated in Section \ref{sec:PPTent}. Section~\ref{sec:beyond} discusses entanglement detection in non $\Delta$-free states, and connects it to the triangle-free induced subgraph problem. Finally, we present a summary of our results and important directions for future work in Section~\ref{sec:conclusion}

\section{Preliminaries} \label{sec:preliminaries}
\subsection{Local diagonal orthogonal invariant states}
We will exclusively be dealing with the finite $d$-dimensional complex Hilbert space $\C{d}$ and the space of $d\times d$ complex matrices $\M{d}$, with the canonical bases $\{\ket{i}\}_{i=1}^d$ and $\{\ketbra{i}{j} \}_{i,j=1}^d$, respectively. \emph{Positive semi-definite} and \emph{entrywise non-negative} matrices in $\M{d}$ will be denoted by $A\geq 0$ and $A\succcurlyeq 0$ respectively. \emph{Quantum states} ($A\geq 0$, $\operatorname{Tr}A=1$) will be denoted by $\rho$. We define $[d]\coloneqq \{1,2,\ldots ,d\}$. The following class of bipartite states in $\M{d}\otimes \M{d}$ will play a crucial role in this paper:
\begin{align} \label{eq:LDOI}
    \rho_{A,B,C} = \sum_{i,j=1}^d A_{ij} \ketbra{ij}{ij} &+ \sum_{1\leq i\neq j\leq d} B_{ij} \ketbra{ii}{jj} \nonumber \\
    &+ \sum_{1\leq i\neq j\leq d} C_{ij} \ketbra{ij}{ji}
\end{align}
where $A,B,C\in \M{d}$ are matrices such that $\operatorname{diag}A=\operatorname{diag}B=\operatorname{diag}C$, $A\succcurlyeq 0$, $B\geq 0$, $C=C^{\dagger}$, $A_{ij}A_{ji}\geq |C_{ij}|^2 \,\, \forall i,j\in [d]$ and $\sum_{ij}A_{ij}=1$. These conditions ensure that $\rho_{A,B,C}$ is indeed a quantum state, as can easily be checked. If the partial transpose $\rho_{A,B,C}^\Gamma\geq 0$ as well (i.e. $\rho$ is PPT), then $C\geq 0$ and $A_{ij}A_{ji}\geq |B_{ij}|^2 \,\, \forall i,j\in [d]$ \cite[Lemma 2.12, 2.13]{Singh2020diagonal}. These states enjoy a special \emph{local diagonal orthogonal invariance} (LDOI) property:
\begin{equation} \label{eq:LDOI-prop}
 \forall O\in \mathcal{DO}_d, \quad  \rho_{A,B,C} = (O\otimes O)\rho_{A,B,C}(O\otimes O)  
\end{equation}
where the group of \emph{diagonal orthogonal} matrices in $\M{d}$ is denoted by $\mathcal{DO}_d$. For an arbitrary state $\rho\in \M{d}\otimes \M{d}$, we define matrices $A,B,C\in \M{d}$ entrywise as $A_{ij}=\langle ij|\rho|ij\rangle$, $B_{ij}=\langle ii|\rho|jj\rangle$, and $C_{ij}=\langle ij|\rho|ji\rangle$ for $i,j\in [d]$. Then, the local averaging operation in Eq.~\eqref{eq:LDOI-proj} acts as the orthogonal projection onto the subspace of LDOI matrices, where $O$ is a random diagonal orthogonal matrix with uniformly random signs $\{ \pm \}$ on its diagonal, which are independent and identically distributed (i.i.d):
\begin{equation} \label{eq:LDOI-proj}
    \rho \mapsto \mathbb{E}_O [(O\otimes O) \rho (O\otimes O)] = \rho_{A,B,C}
\end{equation}
The validity of Eq.~\eqref{eq:LDOI-proj} can be established by first showing that the LDOI property of a state (as stated in Eq.~\eqref{eq:LDOI-prop}) is equivalent to its invariance under the aforementioned local averaging operation and then by checking that all matrix units in $\M{d}\otimes \M{d}$ vanish under this operation except the ones of the form $\ketbra{ij}{ij}, \ketbra{ii}{jj}$ or $\ketbra{ij}{ji}$, which stay invariant. The interested reader should refer to \cite[Sections 6,7]{Nechita2020graphical} to gather more details about this argument.

Now, if $C$ is diagonal in Eq.~\eqref{eq:LDOI}, we obtain a subclass of LDOI states, which is defined by the \emph{conjugate local diagonal unitary invariance} (CLDUI) property:
\begin{align}
    \rho_{A,B} &= \sum_{i,j=1}^d A_{ij} \ketbra{ij}{ij} + \sum_{1\leq i\neq j\leq d} B_{ij} \ketbra{ii}{jj} \label{eq:CLDUI}
\end{align}
\begin{equation}
    \forall U\in \mathcal{DU}_d, \quad \rho_{A,B} = (U\otimes U^{\dagger})\rho_{A,B}(U^{\dagger}\otimes U)  
\end{equation}
where $\mathcal{DU}_d$ is the group of \emph{diagonal unitary} matrices in $\M{d}$. Analogous to Eq.~\eqref{eq:LDOI-proj}, the local averaging operation in Eq.~\eqref{eq:CLDUI-proj} defines the orthogonal projection onto the subspace of CLDUI matrices, where $U$ is a random diagonal unitary matrix having uniform, i.i.d entries on the unit circle in $\C{}$ and $A,B\in \M{d}$ are as defined before.
\begin{equation} \label{eq:CLDUI-proj}
    \rho \mapsto \mathbb{E}_U [(U\otimes U^{\dagger}) \rho (U^{\dagger}\otimes U)] = \rho_{A,B}
\end{equation}
\begin{remark}
If $B$ is diagonal in Eq.~\eqref{eq:LDOI}, we get the subclass of \emph{local diagonal unitary invariant (LDUI)} states, which we choose not to deal with in this article. Since the two classes are linked through the operation of partial transposition, the separability results for \emph{CLDUI} states will identically apply to \emph{LDUI} states as well.
\end{remark}

We now define the notions of \emph{pairwise} and \emph{triplewise completely positive} (PCP and TCP) matrices, which are fundamentally connected to the separability of the CLDUI and LDOI states, respectively. For $\ket{v},\ket{w}\in \C{d}$, we denote the operations of entrywise complex conjugate and Hadamard product in $\C{d}$ by $\ket{\overbar{v}}$ and $\ket{v\odot w}$, respectively. Recall that $\rho\in \M{d}\otimes \M{d}$ (un-normalized) is said to be \emph{separable} if there exist a finite set of vectors $\{\ket{v_k},\ket{w_k}\}_{k\in I}\subset \C{d}$ such that $\rho = \sum_{k\in I}\ketbra{v_k w_k}{v_k w_k}$. 
\begin{definition}\label{def:PCP-TCP}
Let $A,B,C\in \M{d}$. Then, \\[0.2cm] 
$\bullet\,\, (A,B)$ is said to be \emph{pairwise completely positive (PCP)} if there exist a finite set of vectors $\{\ket{v_k},\ket{w_k}\}_{k\in I}\subset \C{d}$ such that
\begin{equation*}
    A = \sum_{k\in I} |v_k \odot \overbar{v_k}\rangle\langle w_k \odot \overbar{w_k}| \quad B = \sum_{k\in I} \ketbra{v_k \odot w_k}{v_k \odot w_k} 
\end{equation*}
$\bullet\,\, (A,B,C)$ is said to be \emph{triplewise completely positive (TCP)} if there exist a finite set of vectors $\{\ket{v_k},\ket{w_k}\}_{k\in I}\subset \C{d}$ such that
\begin{align*}
      A = \sum_{k\in I} |v_k \odot \overbar{v_k}\rangle&\langle w_k \odot \overbar{w_k}| \quad B = \sum_{k\in I} \ketbra{v_k \odot w_k}{v_k \odot w_k}   \\
      &C = \sum_{k\in I} \ketbra{v_k \odot \overbar{w_k}}{v_k \odot \overbar{w_k}} 
\end{align*}
\end{definition}

\begin{theorem}
For $A,B,C\in \M{d}$ with equal diagonals,
\begin{itemize}
    \item $\rho_{A,B}$ is separable $\iff (A,B)$ is \emph{PCP}.
    \item $\rho_{A,B,C}$ is separable $\iff (A,B,C)$ is \emph{TCP}.
\end{itemize}
\end{theorem}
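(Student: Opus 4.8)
The plan is to use the local averaging projections of Eq.~\eqref{eq:LDOI-proj} and Eq.~\eqref{eq:CLDUI-proj} as the bridge between separability of the state and complete positivity of the matrix data. I will carry out the LDOI/TCP case in full, since the CLDUI/PCP case follows verbatim after setting $C$ diagonal and dropping the $C$-conditions. The central computational lemma is to evaluate the projection on a single product-vector projector: for $\ket{v},\ket{w}\in\C{d}$, I will show that $\mathbb{E}_O[(O\otimes O)\ketbra{vw}{vw}(O\otimes O)]=\rho_{A',B',C'}$, where $A'=\ketbra{v\odot\overbar{v}}{w\odot\overbar{w}}$, $B'=\ketbra{v\odot w}{v\odot w}$ and $C'=\ketbra{v\odot\overbar{w}}{v\odot\overbar{w}}$ are precisely the rank-one contributions appearing in Definition~\ref{def:PCP-TCP}.

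To prove the lemma, I expand $\ketbra{vw}{vw}=\sum_{ijkl}v_iw_j\overbar{v_k}\,\overbar{w_l}\,\ketbra{ij}{kl}$ and use that for uniform i.i.d.\ signs the expectation $\mathbb{E}[\epsilon_i\epsilon_j\epsilon_k\epsilon_l]$ equals $1$ exactly when the indices pair up and vanishes otherwise. This annihilates every matrix unit except those of the form $\ketbra{ij}{ij}$, $\ketbra{ii}{jj}$ and $\ketbra{ij}{ji}$, whose surviving coefficients $|v_i|^2|w_j|^2$, $v_iw_i\overbar{v_j}\,\overbar{w_j}$ and $v_iw_j\overbar{v_j}\,\overbar{w_i}$ match $A'_{ij}$, $B'_{ij}$ and $C'_{ij}$ respectively. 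The identical computation with the CLDUI phase average in Eq.~\eqref{eq:CLDUI-proj} additionally kills the $\ketbra{ij}{ji}$ terms, leaving only $A'$ and $B'$.

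With the lemma in hand, the forward implication ($\Leftarrow$) is immediate: if $(A,B,C)$ is TCP with witnesses $\{\ket{v_k},\ket{w_k}\}$, then $\sigma=\sum_k\ketbra{v_kw_k}{v_kw_k}$ is separable, and its average $\mathbb{E}_O[(O\otimes O)\sigma(O\otimes O)]$ is again separable, being a convex average of separable states. By linearity of the projection and the lemma this average equals $\rho_{A,B,C}$, so $\rho_{A,B,C}$ is separable.

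For the converse ($\Rightarrow$), I write a separable decomposition $\rho_{A,B,C}=\sum_k\ketbra{v_kw_k}{v_kw_k}$. Since $\rho_{A,B,C}$ is LDOI it is invariant under the projection, so applying the projection to both sides and using the lemma termwise gives an equality of LDOI states. The remaining step, and the only place requiring care, is to upgrade this equality of states into equality of the defining matrices: I will invoke injectivity of the map $(A,B,C)\mapsto\rho_{A,B,C}$ on triples with equal diagonals. This map reads off all of $A$ together with only the off-diagonal parts of $B$ and $C$, so the diagonals of $B$ and $C$ are invisible to the state. The resolution is that each rank-one contribution satisfies $(A'_k)_{ii}=(B'_k)_{ii}=(C'_k)_{ii}=|v_{k,i}|^2|w_{k,i}|^2$, so the summed matrices automatically carry equal diagonals; since $(A,B,C)$ does too by hypothesis, the shared equal-diagonal constraint forces $B=\sum_k B'_k$ and $C=\sum_k C'_k$ in full, not merely off the diagonal. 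This exhibits $(A,B,C)$ as TCP and completes the argument.
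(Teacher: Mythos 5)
Your proof is correct and takes essentially the same route as the paper: both directions hinge on the local averaging projection of Eq.~\eqref{eq:LDOI-proj}, with TCP $\Rightarrow$ separable obtained by averaging the separable state $\sum_k\ketbra{v_kw_k}{v_kw_k}$, and separable $\Rightarrow$ TCP amounting to reading the TCP data off the separable decomposition. The only difference is explicitness: you prove the single-projector averaging lemma and the diagonal-matching step in detail, where the paper defers the former to a cited reference and calls the latter ``evident.''
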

\begin{proof}
We will prove the theorem for an unnormalized LDOI state $\rho_{A,B,C}$. Assume first that the state is separable and hence admits a decomposition of the form $\rho_{A,B,C}=\sum_{k\in I}\ketbra{v_k w_k}{v_k w_k}$. Then, from the definition of the associated $A,B,C$ matrices, it is evident that they admit the desired TCP decomposition with vectors $\{\ket{v_k},\ket{w_k} \}_{k\in I}$. Conversely, assume that $(A,B,C)$ is TCP with the vectors $\{\ket{v_k},\ket{w_k} \}_{k\in I}$ forming its TCP decomposition as in Definition~\ref{def:PCP-TCP}. Now, construct the separable state $\rho=\sum_{k\in I}\ketbra{v_k w_k}{v_k w_k}$. It is then easy to see that $\rho_{A,B,C}$ is separable, since it can be obtained from $\rho$ by a local projection: $\rho_{A,B,C} = \mathbb{E}_O [(O\otimes O) \rho (O\otimes O)]$, see Eq.~\eqref{eq:LDOI-proj}.
\end{proof}

For a more thorough analysis of the space of LDUI, CLDUI, and LDOI matrices, the readers should refer to \cite[Sections 2-5]{Singh2020diagonal}. In particular, \cite[Section 3]{Singh2020diagonal} contains a comprehensive list of many important examples of states (like the Werner and Isotropic states, mixtures of Dicke states, $3\otimes 3$ edges states, etc.) from the literature which lie in the LDOI class.

\subsection{Graphs}
A (simple, undirected) \emph{graph} $G$ consists of a finite \emph{vertex} set $V$, together with a finite set of two-element (unordered) subsets of $V$, known as the \emph{edge} set $E$. $G$ can be represented pictorially by drawing the elements (vertices) of $V$ as points, with $i,j\in V$ connected by a line if $\{i,j\}\in E$. For two graphs $G=(V_G,E_G)$ and $H=(V_H,E_H)$, we say that $G$ \emph{contains} $H$ ($H\subseteq G$) if $V_H\subseteq V_G$ and $E_H\subseteq E_G$. For $l\geq 3$, an $l$-cycle (denoted $C_l$) in a graph is a sequence of edges $\{i_0,i_1\}, \{i_1,i_2\},\ldots ,\{i_{l-1},i_l\}$ such that $i_1\neq i_2 \neq \ldots \neq i_l$, and $i_0=i_l$. $3$-cycles are called triangles. A graph which does not contain any triangles is called $\Delta$-\emph{free}. A graph is termed \emph{cyclic} if it contains a cycle and \emph{acyclic} otherwise. Given a matrix $A\in \M{d}$, we associate a graph $G(A)$ to it on $d$ vertices such that for $i\neq j$, $\{i,j\}$ is an edge if both $A_{ij}$ and $A_{ji}$ are non-zero. The \emph{adjacency matrix} $A\in \M{d}$ of a graph $G$ (denoted $\operatorname{ad}G$) on $d$ vertices is defined as follows: $\operatorname{diag}A = 0$ and $A_{ij}=A_{ji}=1$ if $\{i,j \}$ is an edge for $i\neq j$. If $G$ is a graph on $d$ vertices, then $B\in \M{d}$ is called a matrix realization of $G$ if $G(B)=G$. We refer the readers to the excellent book by Bondy and Murty \cite{Bondy2011Graph} for a more thorough introduction to graphs. 

\section{Main Results} \label{sec:main}
Maintaining close proximity with the graph-theoretic terminology, we begin by introducing the concept of \emph{triangle-free} ($\Delta$-\emph{free}) bipartite states.
\begin{definition} \label{def:triangle-free}
A bipartite state $\rho\in \M{d}\otimes \M{d}$ is said to be \emph{triangle-free} ($\Delta$-free) if the associated matrix $A\in \M{d}$ defined entrywise as $A_{ij}=\langle ij|\rho |ij\rangle$ for $i,j\in [d]$, is such that $G(A)$ is $\Delta$-free.
\end{definition}

From the definition, it is clear that the property of $\Delta$-freeness of a state is nothing but a statement on the zero pattern of the state's diagonal. It is not too difficult to ascertain whether a state has this property, as efficient polynomial-time algorithms exist to determine whether a graph $G$ is $\Delta$-free \cite{Alon1997trianglefree}; for example, $\operatorname{trace}[(\operatorname{ad}G)^3]=0\iff G$ is $\Delta$-free. However, it should be noted that the notion of $\Delta$-freeness is basis dependent, i.e. given a $\Delta$-free state $\rho\in \M{d}\otimes \M{d}$, there may exist unitary matrices $U,V$ such that $(U\otimes V)\rho (U\otimes V)^\dagger$ is \emph{not} $\Delta$-free.

We now proceed towards exploiting the $\Delta$-free property of an LDOI state to obtain a powerful necessary condition for its separability. Recall that for a vector $\ket{v}\in \C{d}$, $\operatorname{supp}\ket{v}\coloneqq \{i\in [d] : \langle i|v\rangle\neq 0\}$. We define $\sigma(v)$ to be the size of $\operatorname{supp}\ket{v}$. For $B\in \M{d}$, the comparison matrix $M(B)$ is defined as $M(B)_{ij}=|B_{ij}|$ for $i=j$ and $M(B)_{ij} = -|B_{ij}|$ otherwise. It is crucial to note that for $B\geq 0$, $M(B)$ need not be positive semi-definite. In fact, the constraint $M(B)\geq 0$ is far from trivial and is actually used to define the so called class of H-matrices \cite[Chapter 6]{axelsson1996iterative}. However, it turns out that the $\Delta$-freeness of $\rho_{A,B,C}$ is strong enough to ensure that above constraint holds for $B, C\geq 0$, as we now show.
\begin{theorem} \label{theorem:Tfree}
If $\rho_{A,B,C}$ is $\Delta$-free and separable, then $M(B)$ and $M(C)$ are positive semi-definite.
\end{theorem}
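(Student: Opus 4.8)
The plan is to feed the separability hypothesis through the separability$\,\iff\,$TCP equivalence established above, and then to show that $\Delta$-freeness forces each rank-one piece of the resulting decomposition to be supported on at most two coordinates. Concretely, separability of $\rho_{A,B,C}$ yields a finite family $\{\ket{v_k},\ket{w_k}\}_{k\in I}$ realizing the TCP decomposition, so in particular $B=\sum_k\ketbra{v_k\odot w_k}{v_k\odot w_k}\geq0$ and $C=\sum_k\ketbra{v_k\odot\overbar{w_k}}{v_k\odot\overbar{w_k}}\geq0$ as sums of rank-one projectors, while $A_{ij}=\sum_k|\langle i|v_k\rangle|^2\,|\langle j|w_k\rangle|^2$. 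For each $k$ I would set $S_k\coloneqq\operatorname{supp}\ket{v_k\odot w_k}=\{i:\langle i|v_k\rangle\neq0 \text{ and } \langle i|w_k\rangle\neq0\}$.

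The crucial step, which I expect to be the heart (and main obstacle) of the argument, is the observation that every $S_k$ is a clique in $G(A)$. Indeed, if $i\neq j$ both lie in $S_k$, then reading off the relevant entries of $A$ from its TCP decomposition gives $A_{ij}\geq|\langle i|v_k\rangle|^2|\langle j|w_k\rangle|^2>0$ and, symmetrically, $A_{ji}>0$, so $\{i,j\}$ is an edge of $G(A)$. Since $G(A)$ is $\Delta$-free it contains no clique on three vertices, whence $\sigma(v_k\odot w_k)=|S_k|\leq2$ for every $k$. Because $\operatorname{supp}\ket{v_k\odot\overbar{w_k}}=S_k$ as well, the same bound governs the decomposition of $C$; everything after this point is routine positivity bookkeeping, so I will treat $B$ and note that $C$ is identical.

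With the support bound in hand I would reorganize $B$ edge by edge. For an edge $e=\{a,b\}$ of $G(A)$ put $B^{(e)}\coloneqq\sum_{k:S_k=e}\ketbra{v_k\odot w_k}{v_k\odot w_k}$, a positive semi-definite matrix supported on the $\{a,b\}$ block. Since $|S_k|\leq2$, a nonzero off-diagonal entry $B_{ab}$ (with $a\neq b$) can only collect contributions from $k$ with $S_k=\{a,b\}$, so the off-diagonal part of $B^{(e)}$ is exactly $B_{ab}$, while its nonnegative diagonal entries, summed over the edges meeting a vertex together with the singleton ($|S_k|\leq1$) contributions, reproduce $B_{aa}$. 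Positivity of $B^{(e)}$ forces the $2\times2$ minor bound $B^{(e)}_{aa}B^{(e)}_{bb}\geq|B_{ab}|^2$.

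Finally I would flip the phases within each block: define $\widetilde B^{(e)}$ to agree with $B^{(e)}$ on the diagonal but with off-diagonal entry $-|B_{ab}|$. The magnitude of the off-diagonal entry is unchanged, so the minor bound still holds and $\widetilde B^{(e)}\geq0$. Summing over edges rebuilds precisely the comparison matrix, $M(B)=\sum_{e}\widetilde B^{(e)}+R$, where $R\geq0$ is the diagonal matrix carrying the leftover singleton and isolated-vertex contributions; being a sum of positive semi-definite matrices, $M(B)\geq0$. Repeating the construction verbatim on $C$, whose pieces share the same supports $S_k$, gives $M(C)\geq0$. The only point requiring care is the diagonal bookkeeping, but since it only ever adds a nonnegative diagonal matrix it never threatens positivity.
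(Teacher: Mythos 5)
Your proposal is correct and follows essentially the same route as the paper: separability gives the TCP decomposition, $\Delta$-freeness of $G(A)$ forces $\sigma(v_k\odot w_k)\leq 2$ (your clique argument is just the paper's triangle argument made explicit), and the rank-one terms are then grouped by their two-element supports so that $M(B)$ and $M(C)$ decompose as sums of positive semi-definite two-dimensional blocks plus a nonnegative diagonal remainder. Your $\widetilde B^{(e)}$ is exactly the paper's $M(B^{ij})$ and your $R$ is its sum over the index set $I_1$, so the two proofs coincide up to bookkeeping conventions.
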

\begin{proof}
Since $\rho_{A,B,C}$ is separable, $(A,B,C)$ is TCP and there exist vectors $\{\ket{v_k}, \ket{w_k}\}_{k\in I}\subset \C{d}$ such that the decomposition in Definition~\ref{def:PCP-TCP} holds. Now, if there exists a $k\in I$ such that $\sigma(v_k\odot w_k)\geq 3$, then $G(\ketbra{v_k\odot \overbar{v_k}}{w_k\odot \overbar{w_k}})$ (and hence $G(A)$) would contain a triangle, which is not possible since $\rho_{A,B,C}$ is $\Delta$-free. Hence, $\sigma(v_k\odot w_k)=\sigma(v_k\odot \overbar{w_k})\leq 2$ for each $k$. Now, let
\begin{align*}
    I_1 &= \{k\in I : \sigma(v_k\odot w_k) \leq 1 \} \\
    I_{ij} &=\{k\in I : \operatorname{supp}\ket{v_k\odot w_k} = \{i,j\} \} \quad ( \text{for }i<j)
\end{align*}
so that the index set splits as $I = I_1 \cup_{i<j} I_{ij}$. Further, if we define $B^{ij}=\sum_{k\in I_{ij}} \ketbra{v_k\odot w_k}{v_k\odot w_k}$, it is easy to see that $M(B^{ij}) \geq 0$ for all $i<j$, since each $B^{ij}$ is supported on a $2$-dimensional subspace $\mathbb{C}\ket{i}\oplus \mathbb{C}\ket{j}\subset \C{d}$. The following decomposition then shows that $M(B)\geq 0$:
\begin{align*}
    M(B) = \sum_{k\in I_1}& \ketbra{v_k\odot w_k}{v_k \odot w_k} + \sum_{1\leq i<j\leq d} M (B^{ij})
\end{align*}
A similar argument shows that $M(C)\geq 0$ as well.
\end{proof}

Observe how the $\Delta$-freeness of $G(A)$ above is used to force the vectors $v_k,w_k$ to have small common supports. If we define $A_k=\ketbra{v_k\odot \overbar{v_k}}{w_k\odot \overbar{w_k}}$ and $B_k, C_k$ as the rank-one projections onto $\ket{v_k\odot w_k}, \ket{v_k\odot \overbar{w_k}}$, it is clear that $\rho_{A,B,C}= \sum_{k}\rho_{A_k,B_k,C_k}$, where the small common supports imply that each $\rho_{A_k,B_k,C_k}$ has support on a $2\otimes 2$ subsystem (barring some diagonal entries). This is what ensures that $M(B),M(C)\geq 0$, as was shown in the proof above. Remarkably, for CLDUI states, the converse also holds (i.e. if $\rho_{A,B}$ is PPT and $M(B)\geq 0$, then $\rho_{A,B}$ is separable). \cite[Corollary 5.5]{Johnston2019pairwise}. Hence, we obtain a complete characterization of separable $\Delta$-free CLDUI states. 

\begin{theorem}\label{theorem:Tfree-CLDUI}
If $\rho_{A,B}$ is $\Delta$-free and PPT, then $\rho_{A,B} \text{ is separable} \iff M(B)\geq 0$.
\end{theorem}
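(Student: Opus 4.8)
The plan is to treat the two implications separately. The forward direction is a direct specialization of Theorem~\ref{theorem:Tfree}, whereas the reverse direction is the substantive half and will be extracted from the known converse for PPT CLDUI states, so that no single long computation is needed.

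First, for ($\Rightarrow$) I would observe that a CLDUI state $\rho_{A,B}$ is nothing but the LDOI state $\rho_{A,B,C}$ with $C=\operatorname{diag}A$ diagonal, as in Eq.~\eqref{eq:CLDUI}. Consequently, if $\rho_{A,B}$ is $\Delta$-free and separable, Theorem~\ref{theorem:Tfree} applies verbatim and forces $M(B)\geq 0$. The companion conclusion $M(C)\geq 0$ is vacuous here, since $C$ is diagonal with non-negative entries, whence $M(C)=C\geq 0$ automatically. It is worth stressing that the $\Delta$-free hypothesis enters \emph{only} in this direction.

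Next, for ($\Leftarrow$), where the real content lies, I would reduce separability to a PCP decomposition via the equivalence ``$\rho_{A,B}$ separable $\iff (A,B)$ PCP'' established above. Thus it suffices to exhibit vectors $\{\ket{v_k},\ket{w_k}\}$ with $A=\sum_k|v_k\odot\overbar{v_k}\rangle\langle w_k\odot\overbar{w_k}|$ and $B=\sum_k\ketbra{v_k\odot w_k}{v_k\odot w_k}$. The guiding idea is that $M(B)\geq 0$ is exactly the positivity needed to realize $B$ as a sum of pair-supported positive semi-definite blocks: a rank-one product term living on a pair $\{i,j\}$ reproduces a $2\times 2$ block of $B$ while saturating the PPT inequality $A_{ij}A_{ji}\geq|B_{ij}|^2$ on the associated entries of $A$, and any remaining slack in the entrywise non-negative off-diagonal entries of $A$ can be absorbed by further product terms whose two factors have disjoint singleton supports (such terms contribute to $A$ but not to $B$). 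Assembling these pieces gives a PCP decomposition of $(A,B)$, hence separability. This is precisely the assertion of \cite[Corollary 5.5]{Johnston2019pairwise}, which I would invoke to close the argument; combined with the forward direction it yields the claimed equivalence.

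Finally, the main obstacle is genuinely this reverse implication: one must pass from the ``local'', pairwise magnitude comparison encoded by $M(B)\geq 0$ to a globally consistent product-vector decomposition reproducing both $A$ and $B$ at once. The delicate point is that the requisite pair-block diagonals cannot simply be forced to be diagonally dominant, so the blocks must be chosen with care while reconciling the phases and magnitudes of the off-diagonal entries of $B$ with those of $A$ under the PPT constraints. The PPT hypothesis is indispensable here—the converse fails without it—and it is at exactly this step, rather than in any short self-contained calculation, that the cited result does the heavy lifting.
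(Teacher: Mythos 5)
Your proposal is correct and follows essentially the same route as the paper: the forward implication is the specialization of Theorem~\ref{theorem:Tfree} to CLDUI states (where $M(C)\geq 0$ is automatic because $C$ is diagonal with non-negative entries), and the reverse implication is closed by invoking \cite[Corollary 5.5]{Johnston2019pairwise}, exactly the citation the paper relies on. Your remark that the $\Delta$-free hypothesis enters only in the forward direction likewise matches the paper's treatment.
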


For LDOI states, the converse of Theorem~\ref{theorem:Tfree} ceases to hold, see \cite[Example 9.2]{Singh2020diagonal}. Nevertheless, we do have a non-trivial necessary condition for separability, which we now exploit to arrive at our primary entanglement detection strategy in arbitrary $\Delta$-free states.
\begin{theorem}\label{theorem:main_test}
Let $\rho\in \M{d}\otimes \M{d}$ be an arbitrary $\Delta$-free state with the associated matrices $A,B, C\in \M{d}\,$ defined entrywise as $A_{ij}=\langle ij|\rho|ij\rangle$, $B_{ij}=\langle ii|\rho|jj\rangle$, and $C_{ij}=\langle ij|\rho|ji\rangle$ for $i,j\in [d]$. Then, $\rho$ is entangled if either $M(B)$ or $M(C)$ is not positive semi-definite.
\end{theorem}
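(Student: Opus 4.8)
The plan is to prove the contrapositive: assuming $\rho$ is separable, I will show that both $M(B)$ and $M(C)$ must be positive semi-definite. The whole strategy hinges on reducing the case of an arbitrary $\Delta$-free state to the already-settled LDOI case of Theorem~\ref{theorem:Tfree}, and the bridge between the two is precisely the local averaging operation of Eq.~\eqref{eq:LDOI-proj}.

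First I would apply the projection $\rho \mapsto \mathbb{E}_O[(O\otimes O)\rho(O\otimes O)] = \rho_{A,B,C}$. Two observations make this work. The averaging is a convex mixture of conjugations by the local diagonal orthogonal matrices $O\otimes O$; since conjugating a separable state by a local unitary leaves it separable, and separability is preserved under convex combinations, the output $\rho_{A,B,C}$ is separable whenever $\rho$ is. Moreover, by Eq.~\eqref{eq:LDOI-proj} the matrices $A,B,C$ attached to the averaged state $\rho_{A,B,C}$ coincide exactly with the matrices $A,B,C$ read off from $\rho$ in the statement; in particular the diagonal-zero pattern encoded by $A$, and hence the graph $G(A)$, is untouched by the projection. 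Therefore $\rho_{A,B,C}$ is again $\Delta$-free in the sense of Definition~\ref{def:triangle-free}.

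With $\rho_{A,B,C}$ now a $\Delta$-free separable LDOI state carrying the same $B$ and $C$, Theorem~\ref{theorem:Tfree} applies verbatim and yields $M(B)\geq 0$ and $M(C)\geq 0$. Contraposing, if either $M(B)$ or $M(C)$ fails to be positive semi-definite, then $\rho$ could not have been separable, so it is entangled. The only point demanding care --- and really the crux of the argument --- is the twofold claim that the averaging map preserves separability while leaving $A,B,C$ invariant (so that $\Delta$-freeness is inherited by the projected state). Once that is in hand, the statement is an immediate corollary of Theorem~\ref{theorem:Tfree}, with no genuine obstacle remaining; the work has effectively already been done in establishing the validity of Eq.~\eqref{eq:LDOI-proj} and in the proof of Theorem~\ref{theorem:Tfree}.
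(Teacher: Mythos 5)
Your proof is correct and follows exactly the paper's own route: project $\rho$ onto the LDOI subspace via $\mathbb{E}_O[(O\otimes O)\rho(O\otimes O)]$, note that this separability-preserving local operation leaves $A,B,C$ (hence $\Delta$-freeness) intact, and invoke Theorem~\ref{theorem:Tfree}. The only difference is that you spell out the separability-preservation and invariance steps that the paper leaves implicit, which is a fair elaboration rather than a new idea.
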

\begin{proof}
For LDOI states, the conclusion follows from Theorem~\ref{theorem:Tfree}. For an arbitrary $\rho\in \M{d}\otimes \M{d}$, the result follows from the fact that it can be transformed into a $\Delta$-free LDOI state through a separability preserving local operation: $\rho_{A,B,C} = \mathbb{E}_O [(O\otimes O) \rho (O\otimes O)]$.
\end{proof}

We should emphasise here that the above test is very easily implementable. Given a $\Delta$-free state $\rho\in \M{d}\otimes \M{d}$, one simply needs to extract the appropriate entries from $\rho$ to form the $B,C$ matrices and check for the positivity of the corresponding comparison matrices. The test can also be generalized to work for states $\rho\in \M{d_1}\otimes \M{d_2}$ with $d_1< d_2$, as we now illustrate. First define a $d_1\times d_2$ matrix (as before) $A_{ij}=\langle ij|\rho|ij\rangle$, with $i\in [d_1]$, $j\in [d_2]$. Now, corresponding to all possible ways of choosing $d_1$ different columns of $A$, construct a total of $d_2 \choose d_1$ matrices $\widetilde{A}\in \M{d_1}$. For each $\widetilde{A}$ with a $\Delta$-free $G(\widetilde{A})$, construct the projector $P=\sum_{i}\ketbra{i}{i}$, where the sum is over those $i\in [d_2]$ which were chosen to form columns of $\widetilde{A}$. Project the original state $\rho$ locally to obtain a $d_1\otimes d_1$ $\Delta$-free state $\widetilde{\rho} = (I\otimes P)\rho(I\otimes P)$, and apply the test from Theorem~\ref{theorem:main_test}.

\begin{remark}
At this juncture, it is crucial to point out that before applying Theorem~\ref{theorem:main_test}, it is essential to find a suitable product basis in which the given state $\rho$ admits a $\Delta$-free matrix representation (as in Definition~\ref{def:triangle-free}), which seems like a daunting task. We pose this as an open problem in the concluding section.
\end{remark}

With our primary entanglement test in place, we can now begin to analyse the suitability of various kinds of $\Delta$-free graphs from the perspective of Theorem~\ref{theorem:main_test}. More specifically, we wish to describe the class of $\Delta$-free graphs which allows for matrix realizations $B\geq 0$ to exist such that $M(B)\ngeq 0$. In what follows, we will prove that this class contains precisely those $\Delta$-free graphs which contain a cycle (of length $\geq 4$). The first step towards obtaining this characterization is to show that acyclic graphs do not allow for the proposed matrix realizations to exist, which forms the content of our next result.

\begin{proposition}\label{prop:acyclic}
If $B\geq 0$ is a matrix realization of an acyclic graph, then $M(B)\geq 0$.
\end{proposition}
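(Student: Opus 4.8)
The plan is to exhibit a diagonal unitary $D$ for which conjugation sends $B$ exactly onto its comparison matrix, i.e.\ $DBD^\dagger = M(B)$; since unitary conjugation preserves positive semi-definiteness and $B\geq 0$, this immediately yields $M(B)\geq 0$. First I would record two elementary facts. Because $B\geq 0$ is Hermitian with non-negative diagonal, $M(B)$ is Hermitian and agrees with $B$ on the diagonal, $M(B)_{ii}=B_{ii}=|B_{ii}|$, while its off-diagonal entries are $M(B)_{ij}=-|B_{ij}|$. Writing each off-diagonal entry in polar form as $B_{ij}=|B_{ij}|e^{i\phi_{ij}}$, the goal reduces to finding phases $\{\theta_i\}_{i\in[d]}$ and setting $D=\operatorname{diag}(e^{i\theta_i})$ so that $(DBD^\dagger)_{ij}=e^{i(\theta_i-\theta_j)}B_{ij}=-|B_{ij}|$ on every edge $\{i,j\}$ of $G(B)$; equivalently, it suffices to solve the phase-difference system $\theta_i-\theta_j\equiv \pi-\phi_{ij}\pmod{2\pi}$ edge by edge. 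Non-edges require no attention, since there both $B_{ij}$ and $M(B)_{ij}$ vanish, and the diagonal is already correct. Note also that Hermiticity ($\phi_{ji}=-\phi_{ij}$) makes the constraint symmetric in $i,j$, so imposing it once per edge suffices.

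Next I would use acyclicity to solve this system. Since $G(B)$ is acyclic it is a forest; on each of its trees I would root the tree at an arbitrary vertex, set that vertex's phase to $0$, and propagate the constraint $\theta_i=\theta_j+(\pi-\phi_{ij})$ outward along the unique path from the root (for instance by a breadth-first traversal), treating distinct connected components independently. Because a forest contains no cycles, each vertex is reached by a \emph{unique} path and no edge is ever assigned two competing constraints, so the phase assignment is well defined and satisfies the required relation on every edge simultaneously. With $D$ built from these phases, one checks entrywise that $DBD^\dagger=M(B)$: the diagonal matches, edges match by construction, and non-edges match trivially; hence $M(B)=DBD^\dagger\geq 0$.

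The hard part — and the only place the hypothesis is genuinely used — is the consistency of the phase propagation. For a general graph the system $\theta_i-\theta_j\equiv\pi-\phi_{ij}$ is solvable only if, summed around each cycle, the right-hand sides cancel, a condition that can fail for a positive semi-definite realization and is precisely what allows $M(B)\not\geq 0$ in the cyclic case. Acyclicity removes every such cyclic obstruction, so the construction always goes through, completing the proof.
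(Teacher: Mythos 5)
Your proof is correct, and it takes a genuinely different route from the paper's. The paper argues via determinants: using the fact that $M(B)\geq 0$ if and only if every principal submatrix $M(\widetilde{B})$ has non-negative determinant, it expands $\det B$ by the Leibniz formula and uses acyclicity to show that every non-vanishing permutation term pairs each off-diagonal factor $B_{ij}$ with $B_{ji}$, so each such term is built from factors $|B_{ij}|^2$ and diagonal entries; these are unchanged when $B$ is replaced by $M(B)$, whence $\det M(\widetilde{B})=\det\widetilde{B}\geq 0$ for every principal submatrix and $M(B)\geq 0$ follows. You instead exhibit an explicit diagonal unitary $D=\operatorname{diag}(e^{i\theta_1},\ldots,e^{i\theta_d})$ with $DBD^\dagger=M(B)$, solving the edge constraints $\theta_i-\theta_j\equiv\pi-\phi_{ij}\pmod{2\pi}$ by rooting each tree of the forest $G(B)$ and propagating phases outward; your consistency argument is sound, since Hermiticity of $B$ makes each edge constraint symmetric, non-edges satisfy $B_{ij}=B_{ji}=0$ so they impose nothing, and the absence of cycles is exactly what guarantees that the propagation never meets a conflicting constraint. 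Your approach buys more than the statement asks for: it shows $M(B)$ is diagonally unitarily similar to $B$, hence cospectral with it, and it isolates the precise obstruction --- a cyclic phase inconsistency --- whose presence in cyclic graphs is what makes Proposition~\ref{prop:cyclic} possible; this makes the dichotomy between the acyclic and cyclic cases conceptually transparent. The paper's determinant route, in exchange, is more self-contained (only the Leibniz formula and the principal-minor criterion are needed, with no graph traversal or modular arithmetic on phases) and its book-keeping of which permutation terms survive is the same mechanism later echoed in the triangle-free analysis of Theorem~\ref{theorem:Tfree}.
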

\begin{proof}
Observe that $M(B)\geq 0$ if and only if $\operatorname{det}M(\widetilde{B})$ is non-negative for all principal submatrices $\widetilde{B}$ of $B$ (which also correspond to acyclic graphs $G(\widetilde{B})$ if $G(B)$ is acyclic). Hence, to prove the result, it suffices to show that $\operatorname{det}M(B)$ is non-negative for an arbitrary $B\geq 0$ such that $G(B)=(V,E)$ is acyclic. With this end in sight, we first recall the formula: 
\begin{equation*}
    \operatorname{det}B = \sum_{\sigma\in S_d} B_{\sigma} = \sum_{\sigma\in S_d} (-1)^{\sigma} \prod_{i\in [d]}B(i,\sigma_i)
\end{equation*}
where the summation occurs over all permutations $\sigma$ with $-1^\sigma$ denoting their signs. Now, for some $B_\sigma \neq 0$, we claim that if $\{i,j\}\in E$ is an edge for $i\neq j=\sigma_i$, then $\sigma_j=i$, which implies that $B_\sigma$ contains a factor of $|B_{ij}|^2$. Since this factor stays invariant as $B$ becomes $M(B)$, the preceeding claim implies that $\operatorname{det}M(B)=\operatorname{det}B\geq 0$, which is precisely our requirement. We now prove the claim. Assume on the contrary that $\sigma_j =k\notin \{i,j \}$. Then, since $B_\sigma\neq 0$, $B(j,k)\neq 0\implies \{j,k\}\in E$. Now, if $\sigma_k=i$, then $\{i,j\}, \{j,k\}, \{k,i\}$ forms a triangle, which contradicts the fact that $G(B)$ is acyclic. Hence $\sigma_k=l\notin \{i,j,k\}$ and, as before, $\{k,l \}\in E$. Continuing in the fashion, it becomes evident that the sequence of edges $\{i,j\},\{j,k\},\{k,l\},\ldots$ must eventually terminate in a cycle, which is not possible since $G$ is acyclic.
\end{proof}
\begin{figure}[H]
    \centering
    \includegraphics[scale=1.5]{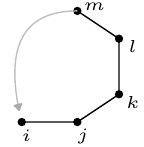}
    \caption{The sequence of edges $\{\{i,j\},\{j,k\},\{k,l\},\ldots \}\subset E$ in the proof of Proposition~\ref{prop:acyclic} eventually terminates in a cycle.}
    \label{fig:my_label}
\end{figure}

If $\rho_{A,B,C}$ is a PPT LDOI state with acyclic $G(A)$ (which implies that $G(B),G(C)$ are acyclic as well), the above proposition implies that $M(B),M(C)\geq 0$, thus rendering Theorem~\ref{theorem:main_test} incapable of detecting any entanglement. For CLDUI states, Theorem~\ref{theorem:Tfree-CLDUI} allows us to trivially prove the following result.
\begin{corollary}
If $\rho_{A,B}$ is such that $G(A)$ is acyclic, then $\rho_{A,B} \text{ is separable} \iff \rho_{A,B} \text{ is PPT}$.
\end{corollary}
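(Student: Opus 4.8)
The plan is to combine the two results the excerpt has already established: Theorem~\ref{theorem:Tfree-CLDUI}, which characterizes separability of $\Delta$-free PPT CLDUI states via the condition $M(B)\geq 0$, and Proposition~\ref{prop:acyclic}, which guarantees $M(B)\geq 0$ whenever $B$ realizes an acyclic graph. The forward implication ($\text{separable}\Rightarrow\text{PPT}$) is free, since PPT is a necessary condition for separability by Peres' criterion, so the entire content lies in the converse.

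First I would observe that the hypothesis $G(A)$ acyclic immediately transfers to $G(B)$. This is because $A$ and $B$ share a diagonal, and the off-diagonal support of $B$ is constrained by the CLDUI structure together with the quantum-state conditions: the constraint $A_{ij}A_{ji}\geq |B_{ij}|^2$ forces $B_{ij}=0$ whenever either $A_{ij}$ or $A_{ji}$ vanishes. Hence every edge of $G(B)$ is an edge of $G(A)$, so $G(B)\subseteq G(A)$ is acyclic as well. In particular, an acyclic graph is trivially $\Delta$-free, so $\rho_{A,B}$ is a $\Delta$-free state and Theorem~\ref{theorem:Tfree-CLDUI} applies.

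Next I would assume $\rho_{A,B}$ is PPT and show it is separable. Since $B\geq 0$ (this is part of the defining conditions for $\rho_{A,B}$ to be a state) and $B$ realizes the acyclic graph $G(B)$, Proposition~\ref{prop:acyclic} yields $M(B)\geq 0$. Now invoking Theorem~\ref{theorem:Tfree-CLDUI} --- whose hypotheses (namely $\Delta$-freeness and PPT) are both met --- gives that $M(B)\geq 0$ is equivalent to separability, so $\rho_{A,B}$ is separable. This closes the converse and completes the equivalence.

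I do not anticipate a genuine obstacle here; the corollary is deliberately an easy downstream consequence, as the phrase \emph{trivially prove} in the excerpt suggests. The only point requiring a little care is the transfer of acyclicity from $G(A)$ to $G(B)$, which one must justify from the state conditions rather than assume; once $G(B)$ is known to be acyclic, Proposition~\ref{prop:acyclic} and Theorem~\ref{theorem:Tfree-CLDUI} chain together mechanically. It is worth flagging that the proof relies essentially on the CLDUI setting, since the LDOI analogue of Theorem~\ref{theorem:Tfree-CLDUI} fails (its converse does not hold, per the remark following Theorem~\ref{theorem:Tfree}), so the corollary cannot be extended to general $\rho_{A,B,C}$ by this route.
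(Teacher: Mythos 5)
Your proof follows the paper's own route exactly: transfer acyclicity from $G(A)$ to $G(B)$, invoke Proposition~\ref{prop:acyclic} to get $M(B)\geq 0$, and close the PPT $\Rightarrow$ separable direction with Theorem~\ref{theorem:Tfree-CLDUI} (applicable since acyclic graphs are trivially $\Delta$-free), the other direction being Peres' criterion. However, the one step you yourself flag as requiring care is justified incorrectly. The inequality $A_{ij}A_{ji}\geq |B_{ij}|^2$ is \emph{not} among the quantum-state conditions for a CLDUI state --- those amount only to $A\succcurlyeq 0$, $B\geq 0$, and equal diagonals; in the paper this inequality is derived as a consequence of \emph{PPT}. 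Without PPT the transfer of acyclicity genuinely fails: take $A=I_2$ and $B$ the all-ones $2\times 2$ matrix, so that $\rho_{A,B}=(\ket{11}+\ket{22})(\bra{11}+\bra{22})$ is a legitimate CLDUI state (it is the unnormalized maximally entangled state, hence NPT) with $G(A)$ edgeless yet $B_{12}\neq 0$, so $G(B)\not\subseteq G(A)$.

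This does not sink the argument, because the transfer is only ever needed in the direction where PPT is assumed (in the separable $\Rightarrow$ PPT direction no graph structure is used at all). But as written, your first paragraph asserts "$G(A)$ acyclic immediately transfers to $G(B)$" for an arbitrary CLDUI state, before PPT enters, and that assertion is false. The fix is purely organizational: move the transfer inside the PPT assumption and cite PPT --- rather than the state conditions --- as the source of $A_{ij}A_{ji}\geq|B_{ij}|^2$, exactly as the paper does when it writes that a \emph{PPT} LDOI state with acyclic $G(A)$ has acyclic $G(B)$ and $G(C)$ as well.
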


With acyclic graphs now out of the picture, the next proposition investigates the remaining class of cyclic $\Delta$-free graphs. This result will form the basis of our method to construct exotic families of PPT entangled $\Delta$-free states in the next section.

\begin{proposition}\label{prop:cyclic}
For a $\Delta$-free cyclic graph, a matrix realization $B\geq 0$ exists such that $M(B)\ngeq 0$.
\end{proposition}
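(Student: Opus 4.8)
The plan is to reduce the whole problem to a single induced cycle, build on it a positive semi-definite matrix whose comparison matrix fails positivity, and then show this obstruction survives when we extend to a realization of the full graph.

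First I would pass to a shortest cycle. Since the graph is $\Delta$-free it has girth $\geq 4$, so a shortest cycle $C=(v_0,v_1,\ldots,v_{l-1})$ has length $l\geq 4$ and, being shortest, is chordless (induced). It therefore suffices to produce a matrix realization $B\geq 0$ of the full graph whose principal submatrix indexed by the cycle vertices already witnesses $M(B)\ngeq 0$.

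On the cycle I would consider matrices of the form $B_C=I+tW$ with $t>0$, where $W$ is a \emph{signed} adjacency matrix of $C$ (entries in $\{0,\pm 1\}$, supported on the cycle edges), so that $|B_{ij}|=t$ on every cycle edge. The decisive observation is that the comparison matrix only sees absolute values, so $M(B_C)=I-tA_C$ with $A_C=|W|$ the ordinary cycle adjacency, \emph{independently of the chosen signs}, whereas positivity of $B_C$ itself does depend on the signs (a $\mathbb{Z}_2$ flux around the loop). Failure of $M(B_C)\geq 0$ is then immediate from the all-ones vector: since each cycle vertex has degree two, $\langle \mathbf{1}|M(B_C)|\mathbf{1}\rangle = l(1-2t)<0$ as soon as $t>1/2$. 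For positivity of $B_C$ I would choose the signs (none for odd $l$, a single flipped edge for even $l$) so that $\lambda_{\min}(W)=-2\cos(\pi/l)$, which is strictly larger than $-2$; this makes $B_C=I+tW$ positive semi-definite precisely when $t\leq 1/(2\cos(\pi/l))$. Because $\cos(\pi/l)<1$ for $l\geq 4$, the interval $\bigl(\tfrac12,\,\tfrac{1}{2\cos(\pi/l)}\bigr)$ is nonempty, and any $t$ inside it makes $B_C$ positive definite while $M(B_C)\ngeq 0$.

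Finally I would extend $B_C$ to a realization of the whole graph. Splitting the vertex set into $S$ (the cycle) and $T$ (the rest), I set $B$ equal to $B_C$ on $S$, a strongly diagonally dominant positive definite matrix (large diagonal, tiny off-diagonal entries on the remaining edges) on $T$, and couple $S$ to $T$ through small entries $\delta$ on whatever edges are present. For $\delta$ small relative to the strictly positive least eigenvalue of the block-diagonal part, a standard perturbation bound keeps $B$ positive semi-definite, while placing a nonzero entry on every edge guarantees $G(B)=G$. Because $C$ is induced, the only off-diagonal entries of $B$ among vertices of $S$ are the cycle edges, so the principal submatrix $M(B)|_S$ equals $M(B_C)$ exactly; a principal submatrix of $M(B)$ failing to be positive semi-definite forces $M(B)\ngeq 0$. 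The main obstacle is the middle step, namely arranging $B_C\geq 0$ and $M(B_C)\ngeq 0$ at once: this is exactly the gauge freedom (the sign/flux pattern) under which $M$ is invariant but $B$ is not, and tuning it so that the spectrum of $B_C$ avoids the resonant value $-2$ is the heart of the argument. The reduction to an induced cycle and the perturbative extension are routine by comparison.
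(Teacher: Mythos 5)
Your proof is correct, and although it shares the paper's two-step skeleton (first realize a single cycle, then extend to the whole graph), both steps are executed by genuinely different means. For the cycle, the paper writes down an explicit integer Gram matrix $B=XX^{T}$, so positive semi-definiteness comes for free and $M(B)\ngeq 0$ is verified by hand; you instead take $B_C=I+tW$ with $W$ a signed adjacency matrix and exploit the gauge (flux) observation that $M(B_C)=I-tA_C$ is blind to the signs while the spectrum of $B_C$ is not. Your spectral facts check out: for odd $l$ the unsigned cycle, and for even $l$ the cycle with one flipped edge, have $\lambda_{\min}(W)=-2\cos(\pi/l)>-2$, while $\lambda_{\max}(A_C)=2$, so every $t$ in the nonempty window $\bigl(\tfrac12,\,\tfrac{1}{2\cos(\pi/l)}\bigr)$ gives $B_C$ positive definite with $M(B_C)\ngeq 0$. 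For the extension, the paper takes $B'$ equal to the cycle matrix padded with zeros and perturbs by $xI_d+y\operatorname{ad}G$, concluding via a continuity/openness argument (non-positive-semi-definiteness is an open condition) that $M(B_x)\ngeq 0$ survives for small $x$; you instead first pass to a shortest --- hence induced --- cycle, so that the cycle block of $M(B)$ is \emph{exactly} $M(B_C)$, and then invoke the fact that a non-PSD principal submatrix forces $M(B)\ngeq 0$, with an operator-norm perturbation bound securing $B\geq 0$. What each buys: your route is exact (no limiting argument), comes with an explicit parameter window, and isolates the conceptual mechanism --- the comparison matrix is invariant under the $\mathbb{Z}_2$ sign freedom on the cycle while $B$ is not; the paper's route is more elementary (integer matrices, no spectral computation) and its extension does not require the cycle to be induced, since the openness argument absorbs arbitrary small perturbations, including entries on chords.
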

\begin{proof}
We first show that the result holds for $k$-cycles. Assume $k=4$. Let $B=XX^T$, where 
\begin{equation} \label{eq:cyclic-B}
X=\left(
\begin{array}{ c r c}
   1 & 0 & \phantom{-}0 \\
   1 & 1 & \phantom{-}0 \\
   0 & 1 & \phantom{-}1 \\
   1 & -1 & \phantom{-}0
  \end{array}
\right) \quad B =\left(
\begin{array}{ c c r r}
   1 & \phantom{-}1 & 0 & 1 \\
   1 & \phantom{-}2 & 1 & 0 \\
   0 & \phantom{-}1 & 2 & -1 \\
   1 & \phantom{-}0 & -1 & 2
  \end{array}
\right)    
\end{equation}
Clearly, $B\geq 0$, $G(B)=C_4$ and $M(B)\ngeq 0$. Similar construction can be employed for arbitrary $k>4$, by defining $X$ as a $k\times (k-1)$ matrix entrywise: $X_{ii}=1$ for $i\in [k-1]$, $X_{i+1,i}=1$ and $X_{ki}=(-1)^{i+1}$ for $i\in [k-2]$, and $B=XX^T$. It is then easy to see see that $B\geq 0$, $G(B)=C_4$ and $M(B)\ngeq 0$. Now, let $G$ be an arbitrary $\Delta$-free graph on $d$ vertices with $C_k\subseteq G$. Let $B'\in \M{d}$ be such that it contains the above constructed $B$ as the principal submatrix corresponding to the vertices which form the cycle $C_k$, with all other entries defined to be zero. Let the adjacency matrix of $G$ be $\operatorname{ad}G$. Then, it is easy to see that for every $x>0, \, \exists \, 0\neq y\in \C{}$ with $|y|\leq x$ such that $xI_d + y\operatorname{ad}G\geq 0$. Now, observe that $B_x = B' + xI_d + y\operatorname{ad}G\geq 0$ for all $x> 0$, and $B' = \operatorname{lim}_{x\rightarrow 0^+}B_x$ is such that $M(B')\ngeq 0$. Since the cone of positive semi-definite matrices is closed in $\M{d}$, we can deduce that there exists an $x>0$ such that $B=B_x\geq 0$, $M(B)\ngeq 0$ and $G(B)=G$, thus finishing the proof.
\end{proof}

\section{Construction of new families of PPT entangled $\Delta$-free states} \label{sec:PPTent}
By exploiting the property of cyclic $\Delta$-free graphs from Proposition~\ref{prop:cyclic}, we now present a simple protocol for constructing new families of PPT entangled $\Delta$-free states in arbitrary $d_1\otimes d_2$ dimensions ($d_1,d_2 \geq 4$): \\[0.15cm]
\emph{Step 1.} Choose a $\Delta$-free cyclic graph $G$ on $d$ vertices. \\[0.2cm]
\emph{Step 2.} Construct a matrix realization $B\in \M{d}$ of $G$ such that $B\geq 0$ and $M(B)\ngeq 0$. \\[0.2cm]
\emph{Step 3.} Construct a family $F_B$ of matrix pairs $(A,C)$ in the following manner: choose $A\succcurlyeq 0, \,C\geq 0$ such that $G(A)=G(B)=G(C)=G, \,\,\operatorname{diag}A=\operatorname{diag}B=\operatorname{diag}C$ and $A_{ij}A_{ji}\geq \operatorname{max}\{|B_{ij}|^2,|C_{ij}|^2 \}\,\, \forall i,j$. (Impose $\sum_{i,j}A_{ij}=1$ to ensure trace normalization). Then, the two matrix-parameter ($\sim d^2$ real parameters) family $\{\rho_{A,B,C} \}_{(A,C)\in F_B}$ represents a class of $d\otimes d$ $G$-PPT entangled LDOI states, associated with the $\Delta$-free cyclic graph $G$ and its matrix realization $B\geq 0$ such that $M(B)\ngeq 0$. The construction can be easily generalized to arbitrary dimensions with $d_1\neq d_2$ with the help of the discussion following Theorem~\ref{theorem:main_test} in Section~\ref{sec:main}.\\[0.1cm]

Let us use the above method to explicitly construct a $4\otimes 4$ PPT entangled family of $\Delta$-free LDOI states. 
\begin{example}
We begin by choosing the unique $\Delta$-free (connected) graph on $4$ vertices: the 4-cycle $C_4$. Next, we take the matrix realization $B\geq 0$ ($M(B)\ngeq 0$) of $C_4$ from Eq.~\eqref{eq:cyclic-B}. To construct the family $F_B$ of matrix pairs $(A,C)$, we proceed as follows. First observe from the general form of $A,C$ given in Eq.~\eqref{eq:A,B,C} that $A,B,C$ have equal diagonals and $G(A)=G(B)=G(C)=C_4$. It should be noted that even though $a_{13}$ may be non-zero, $G(A)$ doesn't contain the edge $\{1,3\}$ as $a_{31}=0$.  Moreover, it is easy to choose complex numbers $c_{ij}$ such that $C\geq 0$. For this illustration, we simply impose the constraint that $c_{ii}\geq \sum_{j\neq i} |c_{ij}|$ for each $i\in [4]$, so that $C$ becomes (hermitian) diagonally dominant and hence positive semi-definite. Finally, we let $a_{ij}$ be non-negative real numbers such that $a_{ij}a_{ji}\geq \operatorname{max}\{|b_{ij}|^2,|c_{ij}|^2\} \, \forall i,j\in [4]$. With these constraints in place, we obtain our family of (unnormalized) $C_4$-PPT entangled $\Delta$-free LDOI states with $\sim 18$ real parameters: $ \{\rho_{A,B,C}\in \M{4}\otimes \M{4} : (A,C)\in F_B \} $. Trace normalization can be enforced by using the fact that $\operatorname{Tr}\rho_{A,B,C} = \sum_{i,j}A_{ij}$.
\begin{equation} \label{eq:A,B,C}
G(A) = \includegraphics[scale=1.4,align=c]{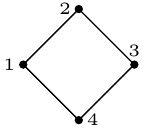} \quad A=\left(
\begin{array}{ c c c c}
   1 & a_{12} & a_{13} & a_{14} \\
   a_{21} & 2 & a_{23} & 0 \\
   0 & a_{32} & 2 & a_{34} \\
   a_{41} & a_{42} & a_{43} & 2
  \end{array}
\right)  
\end{equation}
\begin{equation*} 
\quad B =\left(
\begin{array}{ c c r r}
   1 & \phantom{-}1 & 0 & 1 \\
   1 & \phantom{-}2 & 1 & 0 \\
   0 & \phantom{-}1 & 2 & -1 \\
   1 & \phantom{-}0 & -1 & 2
  \end{array}
\right) \quad C=\left(
\begin{array}{ c c c c}
   1 & c_{12} & 0 & c_{14} \\
   \overbar{c_{12}} & 2 & c_{23} & 0 \\
   0 & \overbar{c_{23}} & 2 & c_{34} \\
   \overbar{c_{14}} & 0 & \overbar{c_{34}} & 2
  \end{array}
\right)   
\end{equation*}
\end{example}

It seems wise to pause here for a moment to appreciate the richness of the above construction method. There is an immense variety of $\Delta$-free (cyclic) graphs, which directly translates into a similar variety in the realm of PPT entangled $\Delta$-free states. To gain perspective, we list the number of distinct (connected) $\Delta$-free cyclic graphs on $d\geq 4$ unlabelled vertices in the following sequence (obtained by subtracting the sequences \cite[\href{https://oeis.org/A024607}{A024607} --      \href{http://oeis.org/A000055}{A000055}]{oeis}):
\begin{equation}
    1, 3, 13, 48, 244, 1333, 9726, 90607, 1143510 \ldots
\end{equation}
For instance, in a $15\otimes 15$ system, the $\sim 10^{10}$ (connected) $\Delta$-free cyclic graphs correspond to $\sim 10^{10}$ distinct classes of PPT entangled $\Delta$-free states. Within each class, the $\sim 15^2$ real parameters and different matrix realizations $B$ of the respective graphs only furthers the diversity. \\[0.1cm]

\section{Going beyond $\Delta$-free states} \label{sec:beyond}
Let $\rho\in \M{d}\otimes \M{d}$ be such that the associated matrix $A$ has a triangle containing graph $G(A)$, which means that the test in Theorem~\ref{theorem:main_test} is inapplicable. Nevertheless, there may exist induced subgraphs $\widetilde{G}$ within $G(A)$ [these are of the form $G(\widetilde{A})$ for $\widetilde{d}\times \widetilde{d}$ principal submatrices $\widetilde{A}$ of $A$, $\widetilde{d}<d$] which are $\Delta$-free. For such a $\widetilde{G}$, let us define the projector $P=\sum_{i}\ketbra{i}{i}$, where the sum runs over those rows/columns $i\in [d]$ which are present in $\widetilde{A}$. Clearly, $(P\otimes P)\rho(P\otimes P)$ is then $\Delta$-free and hence a valid candidate for Theorem~\ref{theorem:main_test}. We succinctly describe the above discussion in the form of a theorem below.

\begin{theorem} \label{theorem: nonTfree}
Consider an arbitrary $\rho\in \M{d}\otimes \M{d}$ with the associated matrices $A,B, C\in \M{d}\,$ defined as $A_{ij}=\langle ij|\rho|ij\rangle$, $B_{ij}=\langle ii|\rho|jj\rangle$, and $C_{ij}=\langle ij|\rho|ji\rangle$ for $i,j\in [d]$. If there exists a principal submatrix $\widetilde{A}$ of $A$ such that $G(\widetilde{A})$ is $\Delta$-free and either $M(\widetilde{B})$ or $M(\widetilde{C})$ is not positive semi-definite, then $\rho$ is entangled.
\end{theorem}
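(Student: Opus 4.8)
The plan is to reduce everything to the already-established $\Delta$-free case of Theorem~\ref{theorem:main_test} by means of a local projection, exactly as anticipated in the discussion preceding the statement. Let $S\subseteq [d]$ be the set of rows and columns retained in the principal submatrix $\widetilde{A}$, and define the local projector $P=\sum_{i\in S}\ketbra{i}{i}$. The candidate operator I would work with is the (un-normalised) projected state $\widetilde{\rho}=(P\otimes P)\rho(P\otimes P)$. Since $P\otimes P$ is a self-adjoint projection, $\widetilde{\rho}\geq 0$ whenever $\rho\geq 0$, so $\widetilde{\rho}$ is a legitimate un-normalised state to feed into the test, living on the $\widetilde{d}\otimes \widetilde{d}$ subsystem with $\widetilde{d}=|S|$.

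The key bookkeeping step is to verify that the matrices associated with $\widetilde{\rho}$ are precisely the principal submatrices $\widetilde{A},\widetilde{B},\widetilde{C}$ on the index block $S$. This follows immediately from $P\ket{i}=\ket{i}$ for $i\in S$ and $P\ket{i}=0$ otherwise: for $i,j\in S$ one computes $\langle ij|\widetilde{\rho}|ij\rangle=A_{ij}$, $\langle ii|\widetilde{\rho}|jj\rangle=B_{ij}$, and $\langle ij|\widetilde{\rho}|ji\rangle=C_{ij}$, while all matrix entries indexed outside $S$ vanish. In particular, the $A$-matrix of $\widetilde{\rho}$ is $\widetilde{A}$, whose graph $G(\widetilde{A})$ is $\Delta$-free by hypothesis, so $\widetilde{\rho}$ is a $\Delta$-free state in the sense of Definition~\ref{def:triangle-free}, and its $B$- and $C$-matrices are exactly $\widetilde{B}$ and $\widetilde{C}$.

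With these identifications in place, Theorem~\ref{theorem:main_test} applies verbatim to $\widetilde{\rho}$: if either $M(\widetilde{B})$ or $M(\widetilde{C})$ fails to be positive semi-definite, then $\widetilde{\rho}$ is entangled. It remains only to transfer this conclusion back to $\rho$, for which I would invoke the standard principle that local operations cannot create entanglement. Concretely, conjugation by the product projector $P\otimes P$ sends a product ensemble $\sum_k \ketbra{v_k w_k}{v_k w_k}$ to $\sum_k (P\otimes P)\ketbra{v_k w_k}{v_k w_k}(P\otimes P)$, in which each term is again a product projector onto $\ket{(Pv_k)(Pw_k)}$. Hence if $\rho$ were separable, so too would $\widetilde{\rho}$ be; contrapositively, the entanglement of $\widetilde{\rho}$ forces $\rho$ to be entangled.

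I do not expect a genuine obstacle here, since the statement is essentially a corollary of Theorem~\ref{theorem:main_test}. The one point deserving care is the verification above that projection yields exactly the principal submatrices and keeps $\widetilde{\rho}$ positive semi-definite and non-zero, the latter being automatic because a non–positive-semi-definite comparison matrix forces the corresponding off-diagonal block of $\widetilde{\rho}$ to be non-trivial. Normalisation plays no role, as both separability and entanglement are invariant under positive rescaling.
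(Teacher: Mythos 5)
Your proof is correct and takes essentially the same approach as the paper: the paper justifies this theorem by exactly the argument in the discussion preceding it, namely projecting $\rho$ locally by $P\otimes P$ onto the rows/columns of $\widetilde{A}$ and applying Theorem~\ref{theorem:main_test} to the resulting $\Delta$-free state. Your additional bookkeeping (identifying the associated matrices of the projected state with $\widetilde{A},\widetilde{B},\widetilde{C}$, and noting that local projections preserve separability) simply makes explicit what the paper leaves implicit.
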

Using Theorem~\ref{theorem: nonTfree}, we have been able to successfully detect entanglement in several randomly generated $d\otimes d$ non $\Delta$-free LDOI states (where $d\sim20$). Let us see a simple exhibit of how this method works.
\begin{example}
Consider an unnormalized $6\otimes 6$ PPT CLDUI state $\rho_{A,B}$ with matrices $A,B\in \M{6}$ defined in Eq.~\eqref{eq:V6-1}. Clearly, Theorem~\ref{theorem:main_test} is not applicable here, since $G(A)$ contains a lot of triangles. However, just by removing the $5$\textsuperscript{th} row and column from $A$, we obtain the principal submatrix $\widetilde{A}$ with a nice $\Delta$-free graph $G(\widetilde{A})$, see Eq.~\eqref{eq:V6-2}. Moreover, since $M(\widetilde{B})\ngeq 0$, Theorem~\ref{theorem: nonTfree} tells us that $\rho_{A,B}$ is entangled.
\begin{widetext}
\begin{alignat}{3}
G(A) &= \includegraphics[scale=1, align=c]{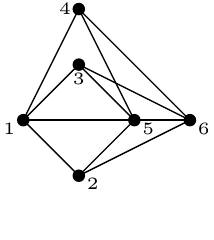} \qquad A &&=\left(
\begin{array}{ c c c c c c}
   11 & 9 & 6 & 6 & 4 & 0 \\
   10 & 13 & 4 & 1 & 11 & 8 \\
   5 & 0 & 13 & 4 & 7 & 8 \\
   6 & 0 & 0 & 13 & 11 & 12 \\
   2 & 9 & 5 & 14 & 15 & 14 \\
   4 & 10 & 4 & 10 & 14 & 11 
  \end{array}
\right) \quad B &&=\left(
\begin{array}{ r r r r r r}
   11 & -7 & 1 & -3 & -1 & 0 \\
   -7 & 13 & 0 & 0 & 6 & 7 \\
   1 & 0 & 13 & 0 & -2 & 3 \\
   -3 & 0 & 0 & 13 & -9 & -8 \\
   -1 & 6 & -2 & -9 & 15 & 10 \\
   0 & 7 & 3 & -8 & 10 & 11
  \end{array}
\right)  \label{eq:V6-1}  \\
G(\widetilde{A}) &= \includegraphics[scale=1, align=c]{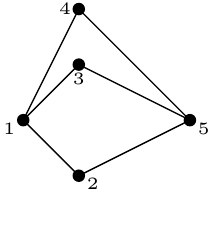} \qquad \widetilde{A} &&=\left(
\begin{array}{ c c c c c c}
   11 & 9 & 6 & 6 & 0 \\
   10 & 13 & 4 & 1 & 8 \\
   5 & 0 & 13 & 4 & 8 \\
   6 & 0 & 0 & 13 & 12 \\
   4 & 10 & 4 & 10 & 11 
  \end{array}
\right) \quad\quad \,\,\widetilde{B} &&=\left(
\begin{array}{ r r r r r r}
   11 & -7 & 1 & -3 & 0 \\
   -7 & 13 & 0 & 0 & 7 \\
   1 & 0 & 13 & 0 & 3 \\
   -3 & 0 & 0 & 13 & -8 \\
   0 & 7 & 3 & -8 & 11
  \end{array}
\right) \label{eq:V6-2}
\end{alignat}
\end{widetext}
\end{example}

It is relevant to point out that the complexity of the above method scales badly with $d$, since the problem of determining the existence of $\Delta$-free induced subgraph in a graph is NP-complete. Let us quickly prove this. It is trivial to check that the problem is NP. Now, for a graph $G=(V,E)$, it is not too hard to see that $G$ contains an independent set of size $k$ (this is an induced subgraph on $k$ vertices with no edges) if and only if $H_G$ contains a $\Delta$-free induced subgraph with $|E|+k$ vertices, where $H_G$ is constructed from $G$ by adding vertices $v_e$ and edges $\{i,v_e\}$ and $\{j,v_e\}$ for each edge $e=\{i,j\}\in E$. Hence, the NP-completeness of the problem of determining whether a graph contains an independent set or not \cite{Garey1979computers} imparts a similar hardness to the problem of finding $\Delta$-free induced subgraphs within a given graph.

\section{Concluding discussion} \label{sec:conclusion}
In this paper, we have presented a unique test to detect a new kind of bipartite entanglement which is present in states with peculiar $\Delta$-free distribution of zeros on their diagonals. Such a connection between the entanglement of a state and its diagonal zero pattern is previously unheard of. From our recipe to construct families of PPT entangled $\Delta$-free states in arbitrary dimensions, it is evident that the ease of $ \Delta $-free entanglement detection does in no way restrict its diversity. We have also established an intriguing link between the problems of detecting entanglement in non $\Delta$-free states and finding $\Delta$-free induced subgraphs within a given graph. Several avenues of research stem from our work. The most obvious question to ask is whether the usual entanglement criteria -- such as the realignment \cite{Rudolph2000realignment, Kai2002realignment} or the covariance \cite{Eisert2008cov} criterion -- can detect entanglement in $\Delta$-free states? In a more practical setting (especially when the full state-tomography is impossible \cite{Huber2014ent}), one would like to know the structure of the entanglement witnesses which can detect this kind of entanglement. These questions are not straightforward to answer because of the incredibly diverse nature of $\Delta$-free entanglement. For example, even if we consider the most trivial family of $C_4$-PPT entangled $\Delta$-free states from Section~\ref{sec:PPTent}, the $18$ real parameters inside provide the states with ample freedom to evade detection from any of the usual entanglement tests. We have even been able to tweak the parameters so that the semi-definite hierarchies from \cite{Doherty2004alg, Eisert2004alg} give up on detecting entanglement in any reasonable time-frame. Hence, one can deduce that our comparison matrix entanglement test is highly non-trivial and has strong potential to provide drastic computational speed-ups over its regular counterparts in a variety of circumstances. We now conclude our discussion with a few pertinent remarks and open problems.

Firstly, as has already been pointed out, the property of $\Delta$-freeness of a state is basis-dependent. In other words, if $\rho\in \M{d}\otimes \M{d}$ is $\Delta$-free and $U,V\in \M{d}$ are unitary matrices, then $(U\otimes V)\rho (U\otimes V)^{\dagger}$ need not be $\Delta$-free. Thus, a natural question arises: Is there a basis-independent description of the $ \Delta $-free property of a state? More specifically, given $\rho\in \M{d}\otimes \M{d}$, how does one guarantee the existence of local unitaries $U,V\in \M{d}$ such that $(U\otimes V)\rho (U\otimes V)^{\dagger}$ is $\Delta$-free? The answer to the above question can give us insights into what it physically means for a state to be $\Delta$-free and hence provide us with a deeper understanding of the nature of $\Delta$-free entanglement itself. Other entanglement-theoretic properties of $\Delta$-free states (distillability, entanglement cost, etc.) deserve further scrutiny.

Secondly, observe that our main result relies heavily on Theorem~\ref{theorem:Tfree}, where the idea is to show that the vectors $\{\ket{v_k},\ket{w_k} \}_{k\in I}$ in the TCP decomposition of $(A,B,C)$ have small common support ($\sigma(v_k\odot w_k)\leq 2$ for each $k$). While the $\Delta$-freeness of $\rho_{A,B,C}$ is sufficient to guarantee this, it is clearly not necessary. Thus, other constraints on $\rho_{A,B,C}$ which ensure that the above property holds can allow one to detect analogues of $\Delta$-free entanglement. More generally, a hierarchy of constraints $\sigma(v_k\odot w_k)\leq n$ for $n\in \mathbb{N}$ on vectors $\{\ket{v_k},\ket{w_k} \}_{k\in I}$ in the TCP decompositions of $(A,B,C)$ can be analysed to see if they entail simple necessary conditions on separability of $\rho_{A,B,C}$. Significant progress along these directions has been made in \cite[Section 3]{singh2020ppt2}, albeit in a different context.  
\\
%TC:ignore
\emph{Acknowledgements.} I would like to thank Ion Nechita for helpful insights and stimulating discussions on this topic. This work is partially supported by an INSPIRE Scholarship for Higher Education by the Department of Science and Technology, Government of India. 
\PRLsep

\bibliography{references}

\end{document}